\theoremstyle{definition}
\newtheorem{observation}[]{Observation}
\begin{document}
\title{Broadcast Graph Is NP-complete\thanks{Our work was supported in part by the Guangdong Provincial Key Laboratory IRADS (2022B1212010006, R0400001-22)  and in part by Guangdong Higher Education Upgrading Plan (2021-2025) with UIC research grant UICR0400025-21.}}

\author{Jinghan Xu\inst{1,2}\orcidID{0009-0009-7861-7784} \and \\Zhiyuan Li\inst{1,3}\orcidID{0000-0002-1991-2603}}
\authorrunning{J.Xu and Z.Li}
% First names are abbreviated in the running head.
% If there are more than two authors, 'et al.' is used.
\institute{Guangdong Provincial Key Laboratory IRADS,\\ BNU-HKBU United International College, Zhuhai 519087, China \and \email{xujinghan@uic.edu.cn}
\and \email{goliathli@uic.edu.cn}}

\maketitle				  % typeset the header of the contribution
\begin{abstract}
The broadcast model is widely used to describe the process of information dissemination from a single node to all nodes within an interconnected network.
 In this model, a graph represents the network, where vertices correspond to nodes and edges to communication links. 
  The efficiency of this broadcasting process is evaluated by the broadcast time, the minimum discrete time units required to broadcast from a given vertex. Determining the broadcast time is referred to as the problem \textsc{Broadcast Time}.
   The set of vertices with the minimum broadcast time among the graph is called the broadcast center.  Identifying this center or determining its size are both proven to be NP-hard.
    For a graph with $n$ vertices, the minimum broadcast time is at least $\lceil \log_2 n\rceil$. The \textsc{Broadcast Graph} problem asks in a graph of $n$ vertices, whether the broadcast time from any vertex equals $\lceil \log_2 n\rceil$. 
  Extensive research over the past 50 years has focused on constructing broadcast graphs, which are optimal network topologies for one-to-all  communication efficiency. However, the computational complexity of the \textsc{Broadcast Graph} problem has rarely been the subject of study. We believe that the difficulty lies in the mapping reduction for an NP-completeness proof. Consequently, we must construct broadcast graphs for yes-instances and non-broadcast graphs for no-instances. The most closely related result is the NP-completeness of \textsc{Broadcast Time} proved by Slater et al. in 1981. More recently, Fomin et al. has proved that \textsc{Broadcast Time} is fixed-parameter tractable.
   In this paper, we prove that \textsc{Broadcast Graph} is NP-complete by proving a reduction from \textsc{Broadcast Time}. We also improve the results on the complexity of the broadcast center problem. We show \textsc{Broadcast Center Size} is in $\Delta_p^2$, and is $D^P$-hard, implying a complexity upper bound of $\Delta_p^2$-complete and a lower bound of $D^P$-hard (NP $\subseteq D^P\subseteq \Delta_p^2$).
\keywords{Computational complexity \and Broadcast graphs \and Broadcast center }
\end{abstract}
\section{Introduction}
In distributed file systems, in order to maintain data consistency, modified files from one node are synchronized with their copies stored in all other nodes in the network. The synchronization is a one-to-all dissemination procedure, called $broadcast$.
 The broadcast process starts with one node named $originator$ holding the information, and each informed node making parallel calls to adjacent nodes. A $call$ refers to the transmission of information from one $sender$ to one adjacent $receiver$ in one discrete time unit. This paper is based on the classic broadcast model with the following assumptions. 
\begin{itemize}
\item A broadcast is divided into discrete time units. 
\item Only one node, the originator, has the information at time unit 0.
\item Each informed node can inform at most one uninformed neighbor per time unit.
\item The broadcast ends when all the nodes in the network are informed.  
\end{itemize}
A network is modeled as a connected graph $G\ =\ (V,E)$, where $V$ is the vertex set representing the nodes, and $E$ is the edge set representing the communication links.

Other dissemination models, such as unicast (one-to-one), multicast (one-to-many), and gossip (all-to-all), are also discussed in the literature \cite{fertin_odd_2017,boyd_randomized_2006,hedetniemi_survey_1988}.

A call through an edge is represented by a directed edge from the sender (an informed vertex) to the receiver (an uninformed vertex). The complete set of calls in the graph forms a $broadcast\ tree$ of $G$, a directed spanning tree rooted at the originator. The tree visually represents the $broadcast\ scheme$, sequencing calls over time units to illustrate the entire broadcast process. In a broadcast scheme, if an informed vertex $v$ does not make a call at time $t$, then $v$ is $idle$ at that time unit; otherwise, $v$ is $busy$.

The distance between 2 vertices $u$ and  $v$, denoted by $d(u,v)$, is the number of edges in the shortest path between them. During the broadcast process, the information passing from an informed vertex $u$ to a vertex $v$ costs at least $d(u,v)$ time units.

The $broadcast\ time$, $b(G,v)$, denotes the minimum time units required to broadcast from vertex $v$ in graph $G$. The broadcast time of the graph, $b(G)$, is the maximum broadcast time from any originator $u\in V$, formally $b(G) = \mathop{\max}\limits_{u\in V}(b(G,u))$. 

The problem of finding the broadcast time $b(G,v)$ is NP-complete \cite{slater_information_1981}. More recently, it has been shown to be fixed-parameter tractable (FPT) using three different kernelizations: the vertex cover, the feedback edge set, or the input deadline of broadcast time \cite{FOMIN2024114508}. A follow-up study on parameterized complexity of broadcast time \cite{tale2024double} indicates that the problem remains NP-complete with a feedback vertex set of size 1 (and therefore of tree width 2). Concurrently, a double exponential lower bound is established when parameterized by the solution size.

In the classic broadcast model, each sender can inform at most one receiver per time unit. Thus, the number of informed vertices can at most double in one time unit. For arbitrary graph $G$ on $n$ vertices, $b(G)\geq \lceil \log n\rceil$. Throughout this paper, the base of the logarithm is assumed to be 2 and is therefore omitted.

A graph $G$ on $n$ vertices is called a $broadcast\ graph$ if $b(G) = \lceil \log n\rceil$, the theoretically lowest broadcast time. The value of $n$ is always partitioned into discrete ranges between two consecutive power of 2, $2^{t-1}+1\leq n\leq 2^t$, because broadcast graphs within this range have the same broadcast time $t$. The broadcast graph structure ensures the most efficient one-to-all data transmission in the network.
A broadcast graph with the minimum possible number of edges is called a $minimum\ broadcast\ graph$. However, minimum broadcast graphs are only constructed for some special values of $n$. Therefore, more studies focus on broadcast graph construction (not necessarily minimum) after broadcast graphs were introduced by Kn{\"o}del \cite{Knodel:75} in 1975. Most constructions are classified into the following methods: the compound method \cite{harutyunyan_new_2020,harutyunyan_simple_2019,dinneen_compound_1999,farley_minimal_1979}, the ad-hoc method \cite{harutyunyan_more_1999,bermond_sparse_1992,grigni_tight_1991,farley_minimum_1979}, and the vertex addition or deletion method \cite{harutyunyan_efficient_2008,harutyunyan_more_1999,bermond_sparse_1992}. Certain broadcast graph constructions with specific number of vertices are also studied \cite{bermond_sparse_1992,maheo_minimum_1994,labahn_minimum_1994,zhou_minimum_2001}. %bg general problem area and key studies

Although numerous results have focused on the broadcast time problem and broadcast graph constructions, the complexity of verifying whether a graph is a broadcast graph has not yet been discussed.
%gap
This task is challenging because a mapping reduce from a known problem $X$ to the broadcast graph problem is needed. This reduction requires transforming yes-instances of $X$ to broadcast graphs and no-instances of $X$ to non-broadcast graphs, which is more difficult than constructing certain broadcast graphs. This paper proves that the broadcast graph problem is NP-complete.
 The broadcast graph problem is formally defined as follows.
\begin{problem}\label{BDP}
\textsc{Broadcast Graph}
\begin{itemize}
\item Input: A graph $G$.
\item Question: Is $G$ a broadcast graph?
\end{itemize}
\end{problem}
Currently, the most related result is given by Slater, Cockayne, and Hedetniemi in \cite{slater_information_1981}. They proved the NP-completeness of \textsc{Broadcast Time} by a reduction from \textsc{3-dimensional Matching}. Other results on NP-completeness of broadcast time problem include work on 3-regular planar graphs \cite{MIDDENDORF1993281}, bounded degree graphs \cite{dinneen1994complexity} and chordal graphs \cite{JANSEN199569}.
\begin{problem}\textsc{Broadcast Time}
\begin{itemize}
\item Input: A three-tuple $(G,V_{0},t)$, where $G=(V, E)$ is a graph, $V_{0}\subseteq V$ is the set of originators, and $t\in \mathbb{N}$ is the given broadcast time bound. 
\item Question: Is $b(G,v) \leq t$ for all $v \in V_{0}$?
\end{itemize}
\end{problem}
We prove \textsc{Broadcast Graph} is NP-complete by a reduction from \textsc{ST-Broadcast Time}, a specialization of \textsc{Broadcast Time}, where $V_0$ is a singleton set of one originator, and $t=\lceil \log |V|\rceil$.
\begin{problem}Single Origin Time-bounded Broadcast Time (\textsc{ST-Broadcast Time})
\begin{itemize}
\item Input: A two-tuple $(G, v)$, where $G=(V, E)$ is a graph, and $v\in V$ is the originator.
\item Question: Is $b(G,v)= \lceil \log |V|\rceil$?
\end{itemize}
\end{problem}
Papadimitriou and Yannakakis \cite{Papadimitriou1982TheCO} proved a similar problem to be NP-complete, which is given a graph $G$, whether one can find a binomial spanning tree of $G$. This problem is equivalent to asking if there exists a vertex $v$ such that $b(G,v)=\lceil \log n\rceil$, while $v$ is specified in \textsc{ST-Broadcast Time}. Therefore, we provide an independent proof for the NP-completeness of \textsc{ST-Broadcast Time} in Appendix \ref{app:st}, by a reduction from \textsc{3-Dimensional Matching}. 
\\
In addition to the \textsc{ST-Broadcast Time} problem, people are also interested in determining the most efficient vertex from which to broadcast in the network. Focusing on these critical nodes leads to the study of the $broadcast\ center$, the set of vertices having the minimum broadcast time in a graph $G$, denoted by $BC_G$. It is demonstrated that both decision problems $Broadcast\ Center\ Deciding\ Problem$ and $Broadcast\ Center\ Size\ Problem$ of determining the broadcast center or its size, are NP-hard \cite{broadcastcenter2021}.

\begin{problem}Broadcast Center Size (\textsc{BC-Size})
\begin{itemize}
\item Input: A two-tuple $(G, x)$, where $G=(V, E)$ is a graph, and $x\in \mathbb{N}$ is the given broadcast center size bound.
\item Question: Is $|BC_G|= x$?
\end{itemize}
\end{problem}
In the present paper we demonstrate that \textsc{BC-Size} is in $\Delta^2_p$ and is $D^P$-hard, which allows us to locate the complexity of \textsc{BC-Size} within an interval, in which the lower bound is improved from NP-hard to $D^P$-hard and the upper bound is $\Delta^2_p$-complete.  
Notably, it is not our concern whether \textsc{BC-Size} is in $D^P$, given that $D^P$ is currently below the lower bound.
\\
$\Delta^2_p$ is a complexity class on the second level of the polynomial hierarchy ($PH$). A problem is in $\Delta^2_p$ if it can be solved by a polynomial-time algorithm with the access to an NP oracle. Thus, $\Delta^2_p$ is also denoted as $P^{NP}$. The difference polynomial time $D^P$ is the complexity class for languages that are in the intersection of a language in NP and a language in $co$-NP. $D^P$ is equivalent to $BH_2$, the second level on the  Boolean hierarchy, and the entire $BH$ is contained in $\Delta^2_p$. Our proof of the $D^P$-hardness is established by a reduction from \textsc{Unique-SAT}, which is known to be $D^P$-complete \cite{blass_unique_1982}. For more details about $PH$ and $BH$, readers are referred to \cite{BH1,BH,Schaefer2008CompletenessIT}.

\begin{problem}Unique Satisfiability (\textsc{Unique-SAT})
\begin{itemize}
\item Input: A CNF boolean formula $\phi(x_0,\cdots,x_{n-1})$ of $n$ variables and $c$ clauses.
\item Question: Is there a unique assignment to satisfy 
$\phi$ ?
\end{itemize}
\end{problem}

This paper is organized as follows. Section 2 reviews some important graph structures used in our proofs. Section 3 proves the NP-completeness of \textsc{Broadcast Graph}. Section 4 studies the complexity of \textsc{BC-Size}.

\section{Preliminaries}
Our proof of NP-completeness for the \textsc{Broadcast Graph} problem involves constructing broadcast graphs (as problem instances) under specific conditions. Therefore, this section reviews some useful graph structures \cite{hl13book} benefitting the proof. \\
To construct broadcast graphs, we use the compounding method introduced by Averbunch, Shabtai and Roditty \cite{Averbuch2013EfficientCO}, and later used by Harutyunyan and Li \cite{harutyunyan_new_2020}, which combines hypercubes or Kn{\"o}del graphs with binomial trees.

\begin{definition}
A binomial tree $BT_{k}=(V,E)$ of degree $k\geq 0$ is defined recursively:
\begin{itemize}
\item (base case) when $k=0$, $BT_{0} = (\{\varepsilon\}, \{\})$;
\item (recursion) for $k\geq1$, $BT_{k}$ rooted at $r$ is consisted of 3 parts, $BT_{k-1}$ rooted at $r$, its replica $BT_{k-1}'$ rooted at $r'$, and an additional edge $(r,r')$. Each vertex is represented by a binary string of $k$ bits in $BT_{k}$, either `1 suffixed by its binary in $BT_{k-1}$', or `0 suffixed by its binary in $BT_{k-1}'$'.
\end{itemize}
\end{definition}
\begin{figure}\centering
	 \def\svgwidth{\textwidth}\import{./figures/}{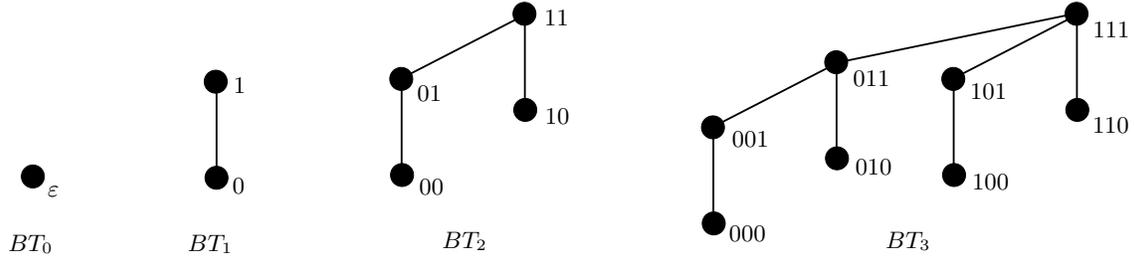}
	 \caption{\textbf{binomial trees of degree $0$ to $3$.} Each  vertex is labeled by its binary representation, This notation allows us to sort the leaves (vertices ends with `0') by their distances to the root in descending order, which is useful in Section 4. In $BT_i$, the first $i-1$ bit of the leaves are considered. For example, in $BT_3$, the first two bits of the leaves are $bin(00)=dec(0)$,  $bin(01)=dec(1)$, $bin(10)=dec(2)$, $bin(11)=dec(3)$.}
\end{figure}
We introduce binary notation for vertices in binomial trees. Note $\varepsilon$ is the empty string. A leaf's binary ends with 0. We define $l_i=\lambda_0\lambda_1\cdots\lambda_{d-2}0$ as the $i^{th}$ leaf if $\lambda_0\lambda_1\cdots\lambda_{d-2}$ is the binary representation of $i$.

To broadcast from the root $v_0$ in $BT_k$, the broadcast scheme follows the recursive definition of $BT_k$ that, each root of a $BT_{k-i}$ calls its replica's root at time $i$. Thus, $b(BT_k,v_0)=k$. Since $BT_k$ has $2^k$ vertices, $k=\log 2^k$ is the minimum time for broadcast in $BT_k$, with each vertex busy in this broadcast scheme. 

Since $2^k$ is the maximum number of vertices that can be informed in $k$ time units, any broadcast tree with broadcast time $k$ is a subtree of the binomial tree $BT_k$.

\begin{observation}\label{leaf}
A multicast from the root $v_0$ in $BT_k$ to its first any $x$ leaves $\{l_0,\cdots,l_{x-1}\}$ requires time $k$.
\end{observation}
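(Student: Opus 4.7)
The plan is to sandwich the multicast time between matching lower and upper bounds, both equal to $k$.

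First I would identify the root as $v_0 = 1^k$ and the $0$-th indexed leaf as $l_0 = 0^k$ from the recursive labeling. Then, by a short induction on $k$ along the recursion $BT_k = BT_{k-1} \cup BT_{k-1}' + (r,r')$, I would establish that the unique tree path from $1^k$ to $0^k$ toggles the leading $1$ at each successive bridge edge, yielding the chain $1^k \to 0\,1^{k-1} \to 0^2\,1^{k-2} \to \cdots \to 0^k$ of length exactly $k$. Since $BT_k$ is a tree and every call must travel along a tree edge, informing $l_0$ requires at least $d(v_0,l_0)=k$ time units; because $l_0$ belongs to the target set $\{l_0,\dots,l_{x-1}\}$ for every $x \ge 1$, this forces a lower bound of $k$ on the multicast time.

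For the upper bound, I would invoke the optimal broadcast scheme on $BT_k$ already described in the paragraph preceding the observation, which informs every vertex of $BT_k$ by time $k$ and hence informs the first $x$ leaves within $k$ steps. The only genuinely new work is therefore the distance computation $d(v_0,l_0)=k$, which is the main (minor) obstacle and is handled by the induction above. Combining the two bounds yields that the multicast from $v_0$ to $\{l_0,\dots,l_{x-1}\}$ takes exactly $k$ time units.
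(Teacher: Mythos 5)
Your proof is correct and takes essentially the same route as the paper, whose entire justification is that $\{l_0,\dots,l_{x-1}\}$ always contains $l_0$ and $d(v_0,l_0)=k$, so the multicast cannot finish sooner. You merely make explicit two things the paper leaves implicit: the inductive verification that $d(1^k,0^k)=k$ along the chain $1^k \to 0\,1^{k-1} \to \cdots \to 0^k$, and the matching upper bound from the optimal broadcast scheme on $BT_k$.
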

This is because $\{l_0,\cdots,l_{x-1}\}$ contains $l_0$, and the distance $d(v_0,l_0)=k$, the multicast time cannot be smaller than $k$.

\begin{definition}
A Kn{\"o}del graph $KG_n=(V,E)$ is defined on $n = 2k$ vertices for $k\geq 0$, where
\begin{itemize}
\item $V=\{v_0, v_1, \cdots, v_{n-1}\}$;
\item $E=\{(v_x,v_y)|x+y\equiv 2^d-1 \mod n, 1 \leq d \leq \lfloor \log n\rfloor\}$. Each edge $(v_x,v_y)$ corresponds to dimension $d$.
\end{itemize}
\end{definition}
Kn{\"o}del graphs are regular and edge-transitive. Hence, a dimensional broadcast scheme is applicable, except that in the final time unit $\lceil \log n\rceil$, all vertices call their $1^{st}$-dimensional neighbor.

\begin{observation}\label{o1} In a Kn{\"o}del graph, both the originator and its $1^{st}$-dimensional neighbor are idle in the last time unit.\end{observation}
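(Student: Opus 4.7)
The plan is to track the dimensional broadcast scheme described above time unit by time unit, and to observe that the $1^{st}$-dimensional neighbor of the originator is precisely the vertex it informs in the very first round; this leaves both endpoints of that edge with no uninformed $1^{st}$-dimensional partner to call when the scheme reuses dimension $1$ in the last round.

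First, by the vertex-transitivity of $KG_n$ (which follows from the edge-transitivity noted above together with regularity), I may assume without loss of generality that the originator is $v_0$. Its unique $1^{st}$-dimensional neighbor is the vertex $v_y$ satisfying $0+y\equiv 1\pmod n$, namely $v_1$; by the same calculation the only $1^{st}$-dimensional neighbor of $v_1$ is $v_0$, so $\{v_0,v_1\}$ forms an isolated matched pair in dimension $1$. At time $1$ of the dimensional scheme the only informed vertex $v_0$ calls along dimension $1$, so at the end of time $1$ both $v_0$ and $v_1$ are informed, and they remain so while the scheme advances through the remaining higher dimensions.

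At the final time unit $\lceil \log n\rceil$ every informed vertex is instructed to call its $1^{st}$-dimensional neighbor. The unique such neighbor of $v_0$ is $v_1$, informed since time $1$, and the unique such neighbor of $v_1$ is $v_0$, informed since time $0$. Hence neither vertex has an uninformed dimension-$1$ partner to call, and both sit idle at time $\lceil \log n\rceil$, as claimed. There is no real obstacle here: the argument is essentially a direct unrolling of the scheme. The only things to keep straight are (i) invoking transitivity to fix the originator as $v_0$ without loss of generality, and (ii) recalling that each vertex has exactly one neighbor per dimension, so that informing $v_1$ at time $1$ permanently exhausts $v_0$'s dimension-$1$ option for the rest of the broadcast.
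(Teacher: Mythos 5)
Your proof is correct and follows the same route the paper intends: the paper simply asserts that the observation ``can be easily demonstrated through the dimensional broadcast scheme,'' and your argument is exactly that demonstration, unrolling the scheme to note that the originator and its dimension-$1$ neighbor exhaust each other's unique dimension-$1$ option at time $1$ and therefore have nothing to do when dimension $1$ is reused in the final round. One small caveat: edge-transitivity plus regularity does \emph{not} in general imply vertex-transitivity (semi-symmetric graphs such as the Folkman graph are counterexamples), so you should instead justify the reduction to originator $v_0$ by the known fact that Kn\"odel graphs are vertex-transitive, or simply observe that the dimensional scheme is defined relative to an arbitrary originator.
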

This can be  easily demonstrated through the dimensional broadcast scheme, a property that makes Kn{\"o}del graphs valuable in broadcast graph constructions \cite{harutyunyan_more_1999,harutyunyan_new_2020,harutyunyan2023broadcast}.

\section{Broadcast Graph Problem is NP-complete}
In this section, we prove the NP-completeness of the \textsc{Broadcast graph} problem by performing a polynomial time many-one reduction (\textsc{ST-Broadcast Time} $\leq_p$ \textsc{Broadcast Graph}) from the known NP-complete problem \textsc{ST-Broadcast Time}.  We demonstrate that \textsc{Broadcast Graph} is both in NP and NP-hard.

\begin{lemma}\label{lm1}
\textsc{Broadcast Graph} is in NP.
\end{lemma}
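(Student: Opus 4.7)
The plan is to exhibit, for each candidate graph $G$ on $n$ vertices, a polynomial-size certificate that a deterministic verifier can check in polynomial time. Since $b(G)\geq \lceil\log n\rceil$ always holds (the informed set at most doubles per time unit, as noted in the introduction), $G$ is a broadcast graph if and only if $b(G,v)\leq \lceil\log n\rceil$ for every $v\in V$. Hence it suffices to witness this upper bound from every originator simultaneously.

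The certificate I would use is a collection of $n$ broadcast schemes $\{S_v\}_{v\in V}$, one per vertex. Each $S_v$ is a directed spanning subtree of $G$ rooted at $v$, together with an edge-time labelling $\tau_v\colon E(S_v)\to\{1,2,\ldots,\lceil\log n\rceil\}$ interpreted as ``the parent informs the child at that time unit''. Since each $S_v$ has at most $n-1$ edges and each label needs $O(\log\log n)$ bits, the whole certificate has size $O(n^2\log n)$, polynomial in the size of $G$.

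Verification proceeds scheme by scheme. For each $v$, the verifier checks that (i) $S_v$ is a spanning subtree of $G$ rooted at $v$, which is linear-time; (ii) for every non-root vertex $u$ with incoming edge of label $t_u$ and every outgoing edge $(u,w)$ of label $t'$, we have $t_u<t'$, so that $u$ is informed before it transmits; (iii) for each vertex $u$, no two outgoing edges in $S_v$ share the same label, so that $u$ makes at most one call per time unit; and (iv) all labels lie in $\{1,\ldots,\lceil\log n\rceil\}$, so the scheme meets the deadline. These are all local checks computable in $O(n^2)$ time overall, so the total verification cost across all $n$ schemes is polynomial.

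There is no substantive obstacle here; the only point that deserves care is to confirm that conditions (i)--(iv) together capture precisely the classic broadcast model of the preliminaries (in particular, that allowing edge-time labelling on a spanning tree is without loss of generality, since an optimal broadcast from $v$ always induces such a tree by retaining, for each informed vertex $w\neq v$, the edge along which $w$ first received the information). Once this equivalence is spelled out, membership in NP follows immediately.
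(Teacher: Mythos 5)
Your proposal is correct, and its overall shape matches the paper's: the certificate in both cases is a family of $n$ spanning trees of $G$, one rooted at each vertex, each encoding a broadcast scheme that finishes by time $\lceil\log n\rceil$. The difference lies in how validity of each scheme is verified. The paper leaves the trees unlabelled and checks that each $T_i$ embeds as a (rooted) subtree of the binomial tree $BT_{\lceil\log n\rceil}$, invoking the known fact that every broadcast tree with broadcast time $k$ is a subtree of $BT_k$, and cites Matula's subtree-isomorphism algorithm for an $O(n^{5/2})$ check per tree. You instead augment each tree with an explicit edge-time labelling and verify the scheme by purely local conditions (sender informed before it transmits, at most one call per vertex per time unit, all calls within the deadline). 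Your route is more elementary and self-contained: it avoids any appeal to subtree isomorphism, the checks are transparent, and the equivalence with the broadcast model is argued directly (including the observation that an optimal broadcast induces such a labelled tree by keeping, for each vertex, the edge of its first call). The paper's route buys a slightly more compact certificate (no labels) at the cost of relying on an external algorithmic result. Both establish membership in NP; neither has a gap.
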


\begin{proof}
We prove this lemma by presenting a polynomial-time verifier.
Let $G=(V,E)$ be an instance of \textsc{Broadcast Graph} with $n$ vertices, and $V=\{ v_1,v_2,...,v_n\}$. The certificate is a set of spanning trees $S=\{T_1,T_2, ..., T_n\}$ of $G$, such that for $1\leq i\leq n$, each $T_i$ is rooted at $v_i$, and represents a valid broadcast scheme with originator $v_i$. To verify whether $T_i$ is a subtree of a degree $n$ binomial tree $BT_{\lceil \log n \rceil }$, the time complexity is $O(n)\times(O(n^{5/2})+O(\log n))$, which is $O(n^{7/2})$ \cite{matula_subtree_1978}.
\end{proof}
For the reduction, we construct an instance of \textsc{Broadcast Graph} from an \textsc{ST-Broadcast Time} instance and prove that this construction is a polynomial-time reduction. 
\begin{definition}\label{def3}
	Given an instance of STBT, $(G_s,v_s)$ on $n_s$ vertices, a graph $G_u$ is constructed by the following algorithm.
	\begin{enumerate}
		\item Create a Kn{\"o}del graph $KG_6$ on 6 vertices labeled $v_0, v_1, \cdots ,v_5$.
		\item Create 5 copies of binomial trees $T_1, \cdots , T_5$ of degree $\lceil \log {n_s}\rceil+1$ rooted at $r_1, \cdots, r_5$.
		\item Create 1 binomial tree $T_6$ of degree $\lceil \log {n_s}\rceil$ rooted at $r_6$.
		\item Merge each pair of vertices $(v_i, r_i)$ for $1\leq i\leq 5$ and merge $(v_s, v_0)$.
		\item Add edges $(u,r_1)$ and $(u,r_5)$ for every vertex $u$ in $G_s$, $T_1$, $T_5$, and $T_6$ (if not already adjacent). Also add an edge $(v_s,r_6)$.
		\item Add edges $(u,r_2)$ and $(u,r_4)$ for every vertex $u$ in $T_2$, $T_3$, and $T_4$ (if not already adjacent).
	\end{enumerate}
\end{definition}
\noindent From the construction, we calculate the number of vertices.
\begin{align*}
	|V_u|	& = 5(2^{\lceil \log n_s\rceil+1}) && \text{vertices in $BT_1, \cdots, BT_5 $ }\\
		& \quad + 2^{\lceil \log n_s\rceil} && \text{vertices in $BT_6$ }\\
			  & \quad + n_s && \text{the vertices from $G_s$}\\
			& = 11(2^{\lceil \log n_s\rceil}) + n_s
\end{align*}

\noindent Let $\lceil \log{n_s}\rceil = t$ for some $t\in\mathbb{N}$. 
\begin{align*}
2^{t-1}+1 \leq & |V_s|\leq 2^{t}\\
2^{t}\times 11 + 2^{t-1}+1 \leq & |V_u|\leq 2^{t}\times 11 +2^{t}\\
2^{t+3} = 2^t \times 8 < & |V_u|< 2^t \times 16= 2^{t+4}
\end{align*}
Thus, if $G_u$ is a broadcast graph (yes-instance of \textsc{Broadcast Graph}), then the broadcast time $b(G_u)=\lceil \log |V_u| \rceil = t+4 = \lceil \log n_s\rceil+4$. 
The number of edges in $G_u$ is calculated as follows.
\begin{align*}
	|E_u|	& = |E_s| && \text{the edges from $G_s$}\\
		& \quad + 5(2^{\lceil \log n_s\rceil+1}) && \text{edges in $BT_1, \cdots, BT_5$}\\
		& \quad + 2^{\lceil \log n_s\rceil} && \text{edges in $BT_6$}\\
		& \quad + 2(n_s)+4(2^{\lceil \log n_s\rceil+1})+2(2^{\lceil \log n_s\rceil})+1 &&\text{step 5}\\
		& \quad +6(2^{\lceil \log n_s\rceil+1}) && \text{step 6}\\
		& = 33(2^{\lceil \log n_s\rceil})+2n_s+|E_s|+1
\end{align*}
$G_u$ is polynomial-time constructible on $G_s$. Next, we prove from both directions(yes and no instances) that the construction is a reduction from $b(G_s,v_s)$ to $G_u$.

\begin{figure}
\centering
	 \def\svgwidth{0.9\textwidth}\import{./figures/}{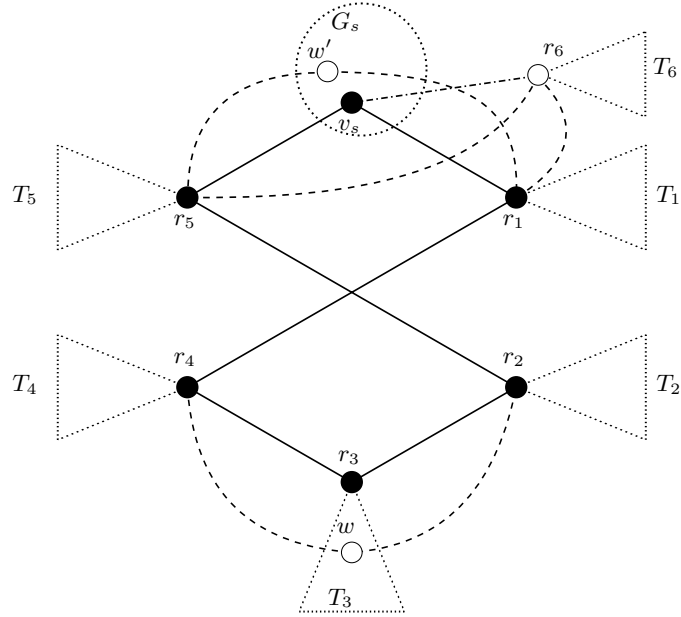}
	 \caption{\textbf{Example construction of $G_u$ on $G_s$.} The dotted triangles represent the binomial trees $T_1, \cdots, T_5$ of degree $\lceil\log{n_s}\rceil+1$, and $T_6$ of degree $\lceil\log{n_s}\rceil$. A Kn{\"o}del graph $KG_6$ is formed by $r_1, \cdots, r_5$, and $v_s$. The dashed lines represent edges added in step 5 and 6 in Definition 6. The dot-dashed line is $(v_s,r_6)$.
	 }\label{fig:BG}
\end{figure}
\begin{lemma}\label{lm2}
$(G_s, v_s)$ is a yes-instance of $STBT$ if and only if $G_u$ is a broadcast graph.
\end{lemma}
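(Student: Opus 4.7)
The plan is to prove both directions of the biconditional, with explicit schemes in the forward direction and a localized case-analysis in the backward direction.

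\textbf{Forward direction.} Assume $b(G_s, v_s) = t = \lceil \log n_s \rceil$. For every originator $v \in V_u$, I would exhibit a broadcast scheme of depth at most $t + 4$. Originators split into Category~A ($v \in V_s \cup V_{T_1} \cup V_{T_5} \cup V_{T_6}$, adjacent to both $r_1$ and $r_5$ via step-5 edges) and Category~B ($v \in V_{T_2} \cup V_{T_3} \cup V_{T_4}$, adjacent to both $r_2$ and $r_4$ via step-6 edges). In both categories the scheme follows a common template: use the first three time units to propagate the token through $KG_6$ and the added edges so that every hub vertex $r_1,\dots,r_5$ is informed by time 3 and $r_6$ by time 4; each $r_i$ then starts its canonical $BT_{t+1}$-broadcast (or $BT_t$ for $r_6$) at time 4 (time 5 for $r_6$), finishing by $t+4$; and $v_s$, informed by time 3 or 4, begins the hypothesised $t$-time $G_s$-scheme at time 4 or 5, also finishing by $t+4$. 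The routine bookkeeping verifies that when $r_i$ ($i\leq 5$) is informed at time $\tau_i\leq 3$ it has $3-\tau_i$ free non-$T_i$ call slots, and $r_6$ has $4-\tau_6$ free non-$T_6$ slots, which suffices to route the hub in every sub-case.

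\textbf{Backward direction.} Assume $G_u$ is a broadcast graph. Rather than instantiate $b(G_u,v)\leq t+4$ at $v=v_s$ (which only yields the weaker bound $b(G_s,v_s)\leq t+2$), I would pick a ``far'' originator $v^*\in V_{T_3}$ of tree-depth at least~2. Such a $v^*$ is adjacent in $G_u$ only to its tree parent/children and to $r_2,r_4$, so $d_{G_u}(v^*,v_s)=3$; consequently $v_s$ is informed no earlier than time~3. Moreover, $r_1$ and $r_5$ can each be informed no earlier than time~2 (via $r_2$ or $r_4$), leaving them at most one non-tree call slot each. The crucial rigidity is that $r_6$'s only neighbors are $v_s,r_1,r_5$ and $T_6=BT_t$ requires $t$ time units from $r_6$, so $r_6$ must be informed by time~4. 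A case split on which of $v_s,r_1,r_5$ informs $r_6$ (in the latter two cases, the informer's unique non-tree slot is consumed, forcing $v_s$ itself to be informed only at time~4 via $r_3$ or $r_6$) shows that in every case $v_s$ must use its first available call to inform $r_6$ at time~4, postponing the start of its $G_s$-broadcast to time~5. The $G_s$-broadcast from $v_s$ therefore runs for at most $t$ time units, giving $b(G_s,v_s)\leq t$; combined with the lower bound $b(G_s,v_s)\geq \lceil\log n_s\rceil=t$ this yields STBT-yes.

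\textbf{Main obstacle.} The main difficulty is the backward case analysis, which requires tracking how $r_1$'s and $r_5$'s single non-tree calls interact with the sparse adjacency of $r_6$ to force $v_s$ to inform $r_6$. In particular, one must rule out every attempt to shortcut the $G_s$-broadcast by using $r_1$ or $r_5$ to inject a $V_s$-vertex as an alternative originator of $G_s$ while still keeping $r_6$ informed in time.
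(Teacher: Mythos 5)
Your forward direction matches the paper's scheme (three time units to light up the $KG_6$ hub, then parallel binomial-tree broadcasts, with $v_s$ relaying to $r_6$ at time 4 and running its hypothesised $t$-unit $G_s$-scheme from time 5), and your choice of a depth-$\geq 2$ originator $v^*$ in $T_3$ for the converse is exactly the paper's choice. The gap is in the converse, and it is precisely the obstacle you flag at the end without resolving. Bounding $r_1$ and $r_5$ by ``at most one non-tree call slot \emph{each}'' leaves a budget of two spare calls, and with two spares the bad scenario survives your case split: one spare informs $r_6$ (or a child of $r_6$ in $T_6$ --- note $r_6$ also has tree neighbours) by time 4, while the other injects some $u\in V_s\setminus\{v_s\}$ at time 3, so that $G_s$ is covered by the two sources $u$ and $v_s$. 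A two-originator broadcast of $G_s$ can finish within the remaining time even when $b(G_s,v_s)>t$, so the conclusion $b(G_s,v_s)\le t$ does not follow. Your case split on which of $v_s,r_1,r_5$ informs $r_6$ controls \emph{when} $v_s$ starts, but what you actually need is that $v_s$ is the \emph{unique} entry point into $G_s$, and that is exactly what this scenario violates.

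The missing ingredient is a sharper count. Since $v^*$ makes a single call at time 1, and from the $T_2\cup T_3\cup T_4$ side $r_1$ is reachable only through $r_2$ while $r_5$ is reachable only through $r_2$ or $r_4$, at most one of $r_1,r_5$ can be informed at time 2; the other is informed at time 3 at the earliest and, rooting a $BT_{t+1}$ that must finish by $t+4$, has \emph{zero} spare slots. Hence the total spare budget of $\{r_1,r_5\}$ is a single call --- this is how the paper sets things up, fixing $r_1$ at time 2 and $r_5$ at time 3 and asking only where $r_1$'s one free call $x_1$ goes. The same counting shows that whenever that spare exists, $r_3$ cannot be informed before time 3, so $v_s$ cannot be informed before time 4 except by the spare itself; one then checks that every placement of the single spare either makes $T_6$ miss its deadline or forces $G_s$ to be broadcast from $v_s$ alone starting at time 5. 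With that observation inserted your argument closes; without it the backward direction is incomplete.
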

\begin{proof}
$(\Rightarrow)$ Assume $(G_s, v_s)$ is a yes-instance of \textsc{ST-Broadcast Time}, that is, $b(G_s, v_s) = \lceil \log n_s \rceil = t$. We apply a broadcast scheme from \cite{harutyunyan_more_1999} for each originator position in $G_u$. There are 3 cases for originators.
\begin{enumerate}
\item If the originator is an arbitrary vertex in $KG_6$, it informs all vertices in $KG_6$ in the first 3 time units using the dimensional broadcast scheme of Kn{\"o}del graphs. From time 4, $r_1,\cdots ,r_5$ broadcast in $T_1,\cdots, T_5$ separately, completing in $3+t+1=t+4$. Simultaneously, $v_s$ informs $r_6$ by time $4$. From time $5$ to $t+4$, $v_s$ completes the broadcast in $G_s$, and $r_6$ in $T_6$.
\item If the originator (not in $KG_6$) is an arbitrary vertex in $T_1, T_5, T_6, G_s$, for example, $w'$ in Fig. \ref{fig:BG}, then $w'$ mimics $v_s$ to broadcast in $KG_6$ in the first 3 time units. This is similar to the dimensional broadcast, except that $v_s$ is informed by $r_5$ or $r_1$ at time 3. By Observation \ref{o1}, either $r_5$ or $r_1$ is idle in the last time unit. The rest follows case 1.
\item If the originator (not in $KG_6$) is an arbitrary vertex in $T_2, T_3, T_4$, for example $w$ in Fig. \ref{fig:BG}, the broadcast follows case 2 except $w$ mimics $r_3$ in $KG_6$, informed by $r_2$ or $r_4$ in time unit 3.
\end{enumerate}
\noindent 
$(\Leftarrow)$ If $(G_s, v_s)$ is a no-instance of \textsc{ST-Broadcast Time}, which means $b(G_s, v_s) > \lceil \log n_s \rceil = t$, then exists $w\in G_u$ such that the broadcast from $w$ in $G_u$ cannot finish in $t+4$ time units. Assume $w$ is an arbitrary non-root vertex in $T_3$, the proof enumerates all possible broadcast schemes from $w$ in $G_u$.\\

First, note $\{(r_1,r_4),(r_2,r_5)\}$ is a cut in $G_u$, with $d(w,r_1)=d(w,r_5)=2$. 
Thus, originating from $w$, $r_1$ and $r_5$ cannot both be informed by time 2. One of $r_1$ and $r_5$ can be informed by time 3, the earliest possible time. With $r_1$ and $r_5$ being symmetric in $G_u$(there is an automorphism $f:G_u\to G_u$ such that $f(r_1)=r_5$), we assume $r_1$ and $r_5$ are informed in time 2 and 3. At time 3, the broadcast from $w$ splits into two parallel independent multi-originator broadcast sub-schemes: one in $T_1\cup T_5\cup T_6\cup G_s$ from $\{r_1,r_5,x_1\}$, where $x_1$ is a neighbor of $r_1$; another in $T_2\cup T_3\cup T_4$ from $\{w,r_2,r_4,x_2,x_3\}$, with $x_2$ as a neighbor of $w$ and $x_3$ of $r_4$.
Since no vertex in one sub-scheme can inform the other (if not, the broadcast time trivially exceed $t+4$), if the first sub-scheme cannot finish in $t+4-3=t+1$, then $G_u$ is not a broadcast graph.

Next, for the first sub-scheme, consider $T_1$ and $T_5$. Broadcasting from $r_5$ in $T_5$ needs $t+1$ time units - the exact remaining time for $r_5$. Similarly, $r_1$ needs $t+1$ units in $T_1$. Vertices in $T_1$ and $T_5$ are kept busy in time $4$ to $t+4$. Thus, the question is which neighbor $x_1$ is informed by $r_1$ at time 3?  Besides in binomial trees $T_1$ or $T_5$, there are four choices:
\begin{enumerate}
\item If $r_1$ informs $v_s$ at time 3, then
	\begin{itemize}
		\item if $v_s$ informs $r_6$ in time 4, broadcasting in $G_s$ cannot finish in $t+4$ due to $b(G_s,v_s)>t$;
		\item if $v_s$ does not inform $r_6$ in time 4, broadcasting in $T_6$ cannot finish in $t+4$, as $T_6$ is a degree $t$ binomial tree.
	\end{itemize}
\item If $r_1$ informs $r_6$ at time 3, it is similar to case 1.
\item If the neighbor $x_1$ is in $T_6$ (not $r_6$), then $d(v_s,x_1)\geq 2$, $v_s$ cannot be informed by time 4, and broadcasting in $G_s$ cannot finish in $t+4$.
\item If the neighbor $x_1$ is in $G_s$ (not $v_s$), then $d(r_6,x_1)\geq 2$, $r_6$ cannot be informed by time 4, and broadcasting in $T_6$ cannot finish in $t+4$. 
\end{enumerate}
%%%%%
Thus, if $(G_s, v_s)$ is a no-instance of \textsc{ST-Broadcast Time}, $G_u$ is not a broadcast graph.
%%%%%
\end{proof}
\noindent Summarizing lemmas \ref{lm1} and  \ref{lm2}, and definition \ref{def3} is a polynomial-time construction,  
\begin{theorem}\label{bupnpc}
\textsc{Broadcast Graph} is NP-complete.
\end{theorem}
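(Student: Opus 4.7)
The plan is to assemble the theorem from the three pieces the paper has already laid out: membership in NP, the polynomial-time construction of $G_u$ from $(G_s,v_s)$, and the two-way correctness of that construction. First I would invoke Lemma~\ref{lm1} to place \textsc{Broadcast Graph} in NP, citing the spanning-tree certificate together with Matula's $O(n^{7/2})$ subtree-isomorphism check against $BT_{\lceil \log n\rceil}$.

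Next I would argue NP-hardness by giving a many-one reduction \textsc{ST-Broadcast Time} $\le_p$ \textsc{Broadcast Graph}. The map is exactly Definition~\ref{def3}: take an instance $(G_s,v_s)$ with $n_s$ vertices, attach five binomial trees $T_1,\dots,T_5$ of degree $\lceil\log n_s\rceil+1$ and one tree $T_6$ of degree $\lceil\log n_s\rceil$ to a $KG_6$ that shares $v_s$, and install the cross edges of steps~5 and~6. To see this is polynomial-time, I would observe from the vertex and edge counts already displayed that $|V_u|, |E_u| = O(n_s + 2^{\lceil\log n_s\rceil}) = O(n_s + |E_s|)$, and that every step of Definition~\ref{def3} only examines pairs of vertices in the newly added structures, so the whole construction can be carried out in time polynomial in $|G_s|$. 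Correctness is then Lemma~\ref{lm2}: the forward direction exhibits an explicit broadcast scheme matching $\lceil \log|V_u|\rceil = \lceil\log n_s\rceil+4$, and the backward direction uses the cut $\{(r_1,r_4),(r_2,r_5)\}$ together with Observation~\ref{o1} to show that a slow originator in $G_s$ forces $b(G_u) > \lceil\log|V_u|\rceil$.

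Finally I would cite the NP-completeness of \textsc{ST-Broadcast Time} (Appendix~\ref{app:st}, via \textsc{3-Dimensional Matching}); combined with the polynomial-time reduction this yields NP-hardness of \textsc{Broadcast Graph}, and together with Lemma~\ref{lm1} completes the proof. Since every nontrivial component is already proved, the theorem itself is essentially a bookkeeping step; the main conceptual obstacle was hidden in Lemma~\ref{lm2}, namely ensuring that the gadget built around $KG_6$ is rigid enough that any deviation from the intended dimensional scheme blows the broadcast time past $t+4$, while still being loose enough that every originator in $G_u$ admits an optimal scheme whenever $b(G_s,v_s)=\lceil\log n_s\rceil$.
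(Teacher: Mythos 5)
Your proposal is correct and follows essentially the same route as the paper, which likewise obtains the theorem by combining Lemma~\ref{lm1} (NP membership), the polynomial-time construction of Definition~\ref{def3}, and the two-way correctness of Lemma~\ref{lm2}, with the NP-completeness of \textsc{ST-Broadcast Time} supplied by Appendix~\ref{app:st}. Your added remark that $|V_u|,|E_u|=O(n_s+|E_s|)$ is a harmless elaboration of the size bounds the paper already displays.
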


\section{The Complexity of \textsc{BC-Size} Problem}
In this section, we examine the complexity of \textsc{BC-Size}. We demonstrate that \textsc{BC-Size} is in $\Delta^p_2$ and $D^P$-hard by reducing from \textsc{Unique-SAT}, which is $D^P$-complete \cite{PAPADIMITRIOU1984244}.
\begin{lemma}\label{td}
\textsc{BC-Size} is in $\Delta^p_2$.
\end{lemma}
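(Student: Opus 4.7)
The plan is to exhibit a polynomial-time algorithm with access to an NP oracle that decides \textsc{BC-Size}. The key observation is that the auxiliary language $L = \{(G, v, t) : b(G, v) \leq t\}$ lies in NP: a valid broadcast scheme rooted at $v$ meeting deadline $t$ is encoded as a spanning tree with edge timestamps of polynomial size, and one can verify in polynomial time that each vertex sends to at most one neighbor per time unit and that every vertex is reached by time $t$. This is essentially the single-originator specialization of the certificate used in Lemma \ref{lm1}. We take $L$ as our NP oracle.

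Given the oracle, the algorithm proceeds in three phases. First, for each vertex $v \in V$, compute $b(G,v)$ by binary searching the threshold $t$ in the range $[\lceil \log n\rceil, n-1]$ with $O(\log n)$ oracle calls per vertex; the lower bound is the information-theoretic minimum, and the upper bound holds because any connected graph on $n$ vertices admits a broadcast along a BFS spanning tree in at most $n-1$ steps. Second, compute $b^* = \min_{v \in V} b(G, v)$ and the count $c = |\{v \in V : b(G,v) = b^*\}|$, which by definition equals $|BC_G|$. Third, accept iff $c = x$. The procedure issues $O(n \log n)$ oracle queries and performs only polynomial bookkeeping, placing \textsc{BC-Size} in $P^{NP} = \Delta^p_2$.

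The main obstacle is essentially nonexistent: the content is a routine oracle computation. The only point worth verifying carefully is that $L$ is indeed in NP with the stated succinct certificate, rather than appealing to the NP-membership of \textsc{Broadcast Graph} itself (which would require $n$ trees); here a single tree with timestamps suffices. As a minor stylistic alternative, one may skip the binary search and query the oracle on all $O(n^2)$ pairs $(v, t)$ directly, but the binary-search formulation is marginally cleaner and keeps the oracle cost to $O(n \log n)$.
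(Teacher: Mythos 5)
Your proposal is correct and follows essentially the same approach as the paper: a polynomial-time algorithm querying an NP oracle for the decision version of broadcast time at each vertex, then counting the vertices achieving the minimum and comparing to $x$. The only differences are cosmetic (per-vertex binary search versus the paper's linear scan over thresholds, and your explicit justification that the oracle language is in NP, which the paper takes from the known NP-completeness of \textsc{Broadcast Time}).
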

\begin{algorithm}[H]
\caption{Polynomial algorithm for deciding \textsc{BC-Size} with access to the oracle \textsc{Broadcast Time}.}
\LinesNumbered
\KwIn{ $G=(V,E),\ c\in \mathbb{N}$}
\KwOut{ $1$ if $|BC_G|=c$; $0$ otherwise. }
\For{$i=\log \lceil |V| \rceil$ \KwTo $n$}{
	size = 0\\
	\For{$v\in V$}{
		\If{$(G,\{v\},i)\in \textsc{Broadcast Time}$}{size++}
	}
	\If{$size != 0$}{
		\If{$size == c$}{\Return 1}
		\Else{\Return 0}
	}
}
\Return 0
\end{algorithm}
\begin{proof}
The algorithm uses an NP-complete oracle for \textsc{Broadcast Time} on line 4 and runs in $O(n^2)$. 
\end{proof}

For the $D^P$-hardness, we construct a \textsc{BC-Size} instance based on a graph representation of a \textsc{Unique-SAT} instance.
\begin{definition}\label{def4}
Consider a CNF formula $\phi(x_0,\cdots,x_{n-1})$, an instance of \textsc{Unique-SAT} with $c$ clauses, $\delta_1,\delta_2,\cdots,\delta_c$. A graph $G$ is constructed by following algorithm.\begin{enumerate}
\item Create $2n$ copies of binomial trees $T^{x_0},T^{\overline{x_0}},\cdots,T^{x_{n-1}},T^{\overline{x_{n-1}}}$ of degree $d_1=\lceil \log c\rceil +1$, rooted at literals $x_0,\overline{x_0},\cdots,x_{n-1},\overline{x_{n-1}}$ respectively. 
\item Create $2n$ copies of binomial trees $H^{x_0},H^{\overline{x_0}},\cdots,H^{x_{n-1}},H^{\overline{x_{n-1}}}$ of degree $d_2=\lceil \log n\rceil +1$, rooted at $r^{x_0},r^{\overline{x_0}},\cdots,r^{x_{n-1}},r^{\overline{x_{n-1}}}$ respectively.
\item Create a binomial tree $T^r$ of degree $d=d_1+d_2+1=\lceil \log c\rceil+\lceil \log n\rceil+3$, rooted at $r$. 
\item Construct a star with $\{r^{x_0},r^{\overline{x_0}},\cdots,r^{x_{n-1}},r^{\overline{x_{n-1}}},r,s\}$, where $s$ is the center.
\item Assume that the literal $\hat x$ of variable $x$ is contained in $j$ clauses $\delta_1',\delta_2',\cdots,\delta_j'$. Construct edges $(\lambda_i^{\hat x},\delta_j')$ for $0\leq i<j$ where $\lambda_i^{\hat x}$ is the $i^{th}$ leaf of $T^{\hat x}$. And do the same for every literal.
\item For each $H^{\hat x_i}$, connect each $k^{th}$ leaf to both $x_k$ and $\overline{x_k}$, except the $i^{th}$ leaf to only $x_i$. Formally, $\{(l_k^{\hat x_i},x_{k})|0\leq k \leq n-1\}\cup\{(l_k^{\hat x_i},\overline{x_{k}})|0\leq k \leq n-1, x_k\neq \hat x_i\}$. 
\item  Connect each of the first $c$ leaves of $T^r$ to one distinct clause. Formally, $\{(l_{i-1},\delta_{i})|1\leq i\leq c\}$.
\item Attach $2n-\lceil\log c\rceil -2$ vertices to each $\delta_i$ to form $c$ stars of size $2n-\lceil\log c\rceil -1$ (equivalent to $2n-d_1$) with center $\delta_i$.
\item Add a path $P=(p_0,p_2,\cdots,p_{2n+\lceil\log n\rceil + 1})$. Connect the star center $s$ to the path $P$ with an edge $(s,p_0)$.
\end{enumerate}
\end{definition}
\begin{figure}
\centering
\def\svgwidth{\textwidth}\import{./figures/}{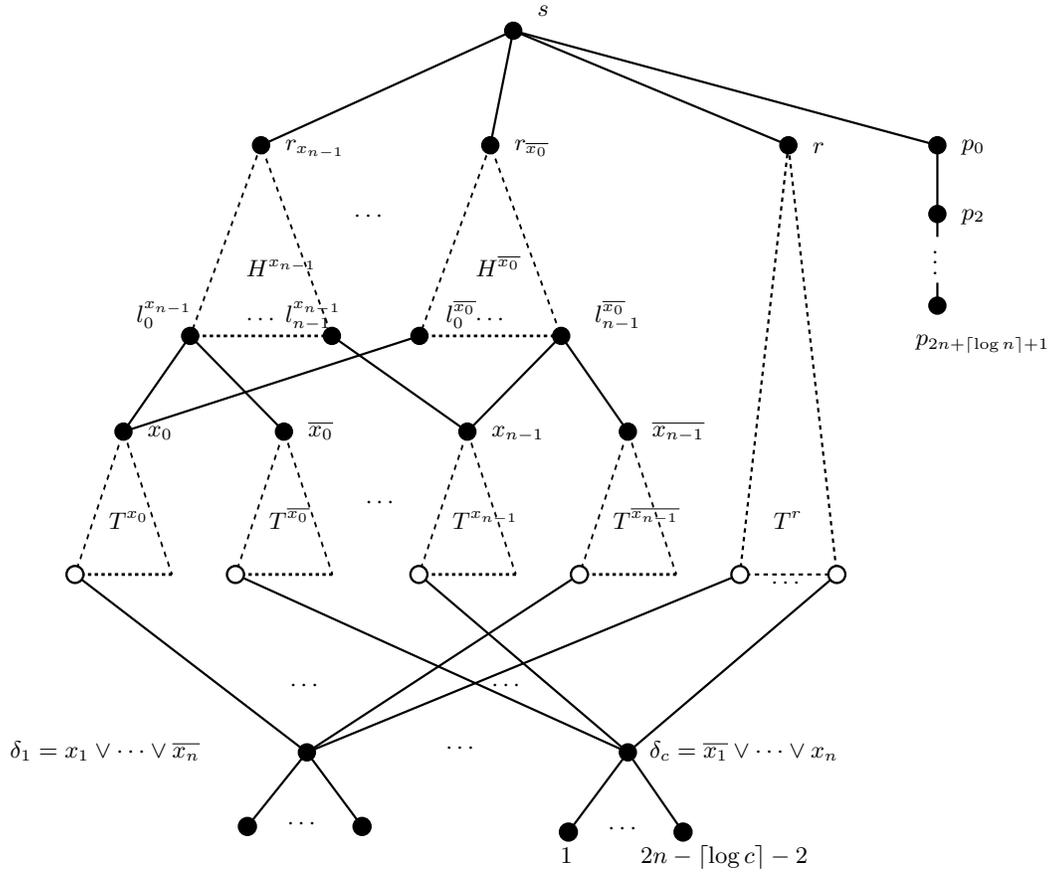}
   \caption{\textbf{An example of the construction of $G$.} The binomial tree $H^{\hat{x_i}}$ represents the assignment $\hat{x_i}=1$. Edges between $H$'s and $T$'s show the two possible assignments for each variable. Each binomial tree $T^{\hat{x_i}}$ presents a literal $\hat{x_i}$. Each $\delta_i$ is a clause. The binomial tree $T^r$ ensures $s$ and $r$ are in the broadcast center. The path $p_0,\cdots$ forces all vertices other than $s,r,r_{x_{0}},\cdots$ are out of the broadcast center.}\label{fig:main}
\end{figure}
\noindent Figure \ref{fig:main} shows an example of the construction. The construction is polynomial-time. \\
The variable gadget $\hat{x_i}$ is a single vertex adjacent to two literals $x_i$ and $\overline{x_i}$. The clause gadget $\delta_k$ is a vertex adjacent to $x_i$(or $\overline{x_i}$) if $x_i$(or $\overline{x_i}$) is a literal in $\delta_k$. Assignments on $\hat{x_i}$ are represented by choosing either $x_i$ or $\overline{x_i}$ for broadcast. For example, if $x_i=0$, then $\hat{x_i}$ calls $\overline{x_i}$, which calls its clauses (see Figure \ref{fig:c} for an example), and vice versa. A literal can be in multiple clauses, so each $T^{x_i}$ (or $T^{\overline{x_i}}$) of degree $d_1$ connects to each literal $x_i$ (or $\overline{x_i}$) adjacent to theirs leaves. Broadcasting from the set of all variables (as multi-originator) to all clauses completes in $d_1+2$ time units. \\
The construction also ensures a ``unique'' assignment on $\phi$, achieved by selecting each essential literal among the $2n$ literals. In $G$, this is represented by a binomial tree $H^{\hat{x_i}}$ rooted at $r_{\hat{x_i}}$. Selecting $x_i$ forces it to be $True$, so $H^{\hat{x_i}}$'s leaves connect only to $x_i$, not $\overline{x_i}$. Broadcasting from $r_{\hat{x_i}}$ completes in $d_1+d_2+3$ time units if and only if $\phi$ is satisfiable with $\hat{x_i}=True$ (see Figure \ref{fig:c} for example). The rest steps of the construction create the following. \begin{itemize}
\item A star centered at $s$ to originate a broadcast.
\item A binomial tree $T^r$ ensures $s$ and $r$ are in the broadcast center even if $\phi$ is not satisfiable.
\item A path $P$ simplifies the proof.
\item Stars centered at clauses are to fill the broadcast time.
\end{itemize}
\begin{figure}
\centering
\def\svgwidth{\textwidth}\import{./figures/}{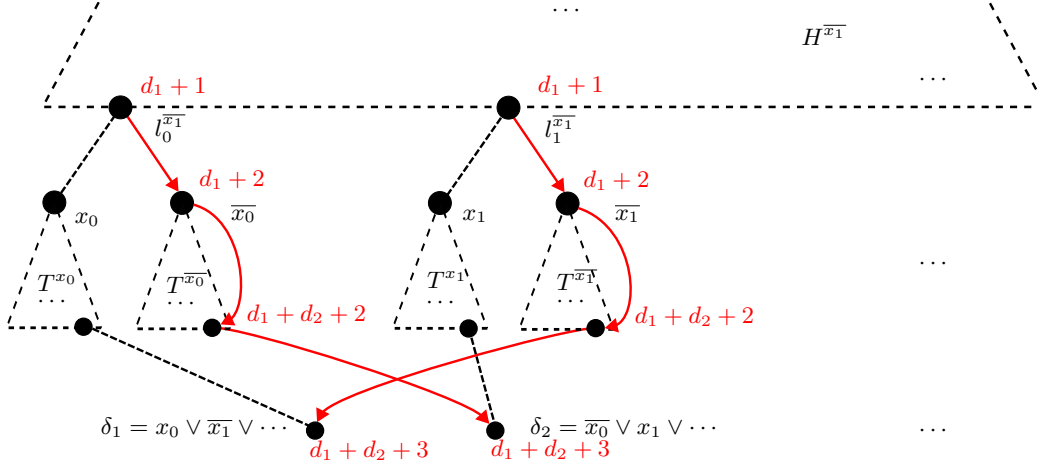}
   \caption{\textbf{An example for variable, literal, and clause gadgets.} Assume $\delta_1=x_0\vee \overline{x_1}\vee\cdots$ and $\delta_2=\overline{x_0}\vee x_1\vee\cdots$ are two clauses in $\phi$. Vertex $\delta_1$ is adjacent to one leaf in $T^{x_0}$ and another leaf in $T^{\overline{x_1}}$ (same for $\delta_2$). If all leaves of $H^{\overline{x_1}}$ are informed at time unit $d_1+1$, then directed edges and time units (in red) show the calls from $l_0^{\overline{x_1}}$ and $l_1^{\overline{x_1}}$ to $\delta_1$ and $\delta_2$ respectively at $d_1+d_2+3$, mimicking the assignment $x_0=False$ and $x_1=False$ to satisfy both $\delta_1$ and $\delta_2$.}\label{fig:c}
\end{figure}

\begin{lemma}\label{lm4}
There exists a unique assignment to $\phi(x_0,\cdots,x_{n-1})$ if and only if the graph $G$ constructed above has a broadcast center of size $n+2$. 
\end{lemma}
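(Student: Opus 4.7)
My plan is to pin down the broadcast center of $G$ exactly as $\{s, r\} \cup \{r^{\hat{x_i}} : \hat{x_i} \text{ is true in some satisfying assignment of } \phi\}$, and then read off the claimed equivalence. First I would compute the minimum broadcast time $B$ in $G$ by combining two lower bounds. The long path $P$ forces $b(G,v) \ge d(v, p_{2n+\lceil \log n\rceil+1})$, and the $c$ pendant stars of size $2n-d_1$ attached to the clauses add further latency: every scheme must route information to these stars either through $T^r$ (by way of $r$) or through a literal tree (by way of some $H^{\hat{x_i}}$), and combining the routing latencies with the $(2n-d_1-1)$-unit star fan-out pins $B = 2n+\lceil \log n\rceil+3$. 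I would then exhibit explicit schemes achieving $B$ from $s$ and from $r$: $s$ calls $r$ at time $1$ to launch $T^r$ (which delivers every clause star exactly at $B$), $p_0$ at time $2$, and the $r^{\hat{x_i}}$'s in an order that places $r^{x_0}$ and $r^{\overline{x_0}}$ first so that the literal trees finish in time; the scheme from $r$ is symmetric. A case check over remaining vertex types (literals, clauses, star leaves, and interior vertices of $T^r$, $H^{\hat{x_i}}$, $T^{\hat{x}}$, and $P$) shows every $v \notin \{s, r\}\cup\{r^{\hat{x_i}}\}$ has $d(v, p_{2n+\lceil \log n\rceil+1}) > B$; this is the purpose of the path.

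The core of the lemma is the equivalence $r^{\hat{x_i}} \in BC_G$ iff $\phi$ has a satisfying assignment with $\hat{x_i}$ set true. For the $(\Leftarrow)$ direction I fix such an assignment $A$ and build a scheme from $r^{\hat{x_i}}$: it calls $s$ at time $1$, then broadcasts in $H^{\hat{x_i}}$ at times $2,\ldots,d_2+1$ along the standard binomial-tree schedule, and at time $d_2+2$ each leaf $l_k$ of $H^{\hat{x_i}}$ calls the literal that $A$ makes true (forced to $\hat{x_i}$ when $k=i$). Each informed literal then broadcasts its $T^{\hat{x}}$ in $d_1$ time units, reaching the first $c$ leaves at time $d_1+d_2+2$ by Observation~\ref{leaf}; since $A$ satisfies $\phi$, every clause is reached by $d_1+d_2+3$ and fans out to its $2n-d_1-1$ star leaves by $B$. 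In parallel $s$ relays to the path, to $r$, and to the other $r^{\hat{x_j}}$'s exactly as in the $s$-originator scheme.

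The $(\Rightarrow)$ direction is the main obstacle. Assuming $b(G,r^{\hat{x_i}})=B$, I would fix an optimal scheme $\sigma$ and extract an assignment. Tightness along the path forces $r^{\hat{x_i}}$ to call $s$ at time $1$ and $s$ to call $p_0$ at time $2$, leaving $r^{\hat{x_i}}$ exactly $d_2$ later time units for its $d_2$ children in $H^{\hat{x_i}}$; hence the leaves of $H^{\hat{x_i}}$ are informed no earlier than $d_2+1$. The clause star fan-outs force every clause to be informed by $d_1+d_2+3$. The alternative $T^r$ route from $r^{\hat{x_i}}$ has latency $r^{\hat{x_i}}\to s\to r$ plus $d+1$ inside $T^r$ plus one more edge to the clause, totaling $d_1+d_2+5 > d_1+d_2+3$, so every clause must be reached through a literal tree rooted at a literal informed by some leaf of $H^{\hat{x_i}}$ at time $d_2+2$. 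Reading off the calls $l_k\mapsto\{x_k,\overline{x_k}\}$ for $k\ne i$, together with $l_i\mapsto \hat{x_i}$, defines an assignment $A$ that satisfies every clause, hence $\phi$. The delicate point is ruling out hybrid schemes that mix $T^r$-route and literal-route clause deliveries, or that delay the $H^{\hat{x_i}}$ broadcast to gain a head start elsewhere: the binomial-tree tightness from Observation~\ref{leaf} combined with the exact clause-star count forces this rigidity.

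Combining these pieces: if $\phi$ has a unique satisfying assignment, exactly one of each pair $\{r^{x_i}, r^{\overline{x_i}}\}$ lies in $BC_G$, giving $|BC_G|=n+2$; if $\phi$ has two distinct satisfying assignments differing at some $x_j$, both $r^{x_j}$ and $r^{\overline{x_j}}$ lie in $BC_G$ so $|BC_G|>n+2$; if $\phi$ is unsatisfiable then $|BC_G|=2<n+2$. Thus $|BC_G|=n+2$ iff $\phi$ has a unique satisfying assignment.
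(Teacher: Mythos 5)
Your plan follows essentially the same route as the paper's proof: establish $b(G,s)=b(G,r)=t$ by an explicit scheme, show that $r^{\hat{x_i}}$ attains time $t$ exactly when $\phi$ is satisfiable with $\hat{x_i}$ true (by latency-comparing the literal-tree route against the too-slow $T^r$ route), use the path and the clause stars to exclude every other vertex, and then count. The only notable divergences are that your final tally gives $|BC_G|=n+2$ in the unique case, which matches the lemma statement whereas the paper's own proof text counts $n+3$, and that you explicitly flag (without fully discharging) the delicate points about hybrid routings and the scheduling of the $H$-trees that the paper also passes over quickly.
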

\begin{proof}
Let $t = 2n+\lceil \log n\rceil +3$. The proof consists of four steps.\begin{enumerate}
\item We first show that \begin{align*}b(G,s)=b(G,r)=t, \end{align*}using the following broadcast scheme.
In time unit 1, $s$ calls $r$ (or vice versa). From time unit $2$ to $d_1+d_2+1$, $r$ broadcasts within $T^r$. All clauses are informed by time $d_1+d_2+2$, and all vertices in stars are informed by time $t$ ($(\lceil \log c\rceil+1)+(\lceil \log n\rceil+1)+3+(2n-\lceil \log c\rceil-2)$).\\
Simultaneously, $s$ calls $p_0$ at time 2, and the broadcast along the path completes by time $t$. Starting at time 3, $s$ calls all other $r^x$'s. The last $r^x$ is informed by time $n+2$ and its binomial tree $H^x$ completes by time $t$. 

\item Next, we prove that,\begin{align*} \text{If}\ \phi\ \text{is\ satisfiable\ when}\ x_i=1\text{,\ then}\ b(G,r_{x_i})=t.\end{align*} This case is similar to the first, with the difference occurring at time $d_1$, when the broadcast in $H^{x_i}$ finishes. If there is a unique satisfying assignment, the leaves of $H^{x_i}$ select $n-1$ literals of the assignment (including $x_i$). The $n$ $T$-trees, rooted at these literals, are informed by time $d_1+1$. Consequently, all clauses are informed by time $d_1+d_2+2$. The rest steps follow case 1.

\item We also need to show that, \begin{align*}\text{If}\ \phi\text{\ is\ not\ satisfiable\ when}\ x_i=1,\ \text{then}\ b(G,r_{x_i})>t.\end{align*} Consider the broadcast from $r_{x_i}$, it must inform $s$ in the first time unit because the distance $d(r_{x_i},p_{2n+\lceil \log n \rceil +1})=t$. If the satisfying assignment is unique, since $r_{\overline{x_i}}$ is unreachable in $d_1+1$, there exists some clause $\delta_k$ cannot be informed by time $d_1+d_2+2$. Thus, completing the star at $\delta_k$ requires at least $t+1$ time.

\item Finally, \begin{align*}\text{For\ all\ vertices}\ v\ \text{not\ in\ previous\ cases,}\ b(G,v)>t.\end{align*}
If $v$ is not on the path, then $b(G,v)\geq d(v,p_0)+b(P,p_0)= d(v,r)+d(r,p_0)+b(P,p_0)\geq 1+2+b(P,p_0)=2n+\lceil\log n\rceil+4=t+1$.\\
If $v$ is on the path, then $b(G,v)\geq d(v,\delta_1)+b(Star,\delta_1) = d(v,s)+d(s,\delta_1)+b(Star,\delta_1) \geq 1+1+d_1+d_2+3+1+(2n-d_1)=2n+\lceil\log n\rceil+5=t+2$.\\
\end{enumerate}
According to the above, $b(G)=t$. If $\phi$ is uniquely satisfiable when $x_i=1$, then $b(G, r_{x_i})=b(G)=b(G,s)=b(G,r)=t$, so $|BC_G|=n+3$. If there are multiple satisfying assignments, some vertices in case $3$ are also in the broadcast center, making $|BC_G|>n+3$. If there is no satisfying assignment, $|BC_G|=|\{s,r\}|=2$. In conclusion, $s$ and $r$ are always in the broadcast center, and $r_{x_i}$ is included if and only if $\phi$ is satisfiable when $x_i=1$. %Thus, if there is a unique assignment, $|BC_G|=n+2$; if there are multiple assignment, $|BC_G|>n+2$; and if there is no assignment, $|BC_G|=2$. 
\end{proof}
\noindent Summarizing lemmas \ref{td} and \ref{lm4}, and definition \ref{def4} is a  polynomial-time construction,
\begin{theorem}
\textsc{BC-Size} is $\Delta^p_2$ and $D^P$-hard.
\end{theorem}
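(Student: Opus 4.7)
The plan is to assemble the theorem from the two preceding lemmas together with a routine check that the construction of Definition \ref{def4} can be carried out in polynomial time. Lemma \ref{td} already gives membership in $\Delta^p_2$ via an oracle algorithm, and Lemma \ref{lm4} already gives the equivalence between unique satisfiability of $\phi$ and $|BC_G|=n+2$. What remains is to package these into the claimed upper and lower bounds.

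For the membership side I would simply cite Lemma \ref{td}: the algorithm given there makes $O(|V|^2)$ queries to an NP oracle for \textsc{Broadcast Time} and spends only polynomial time outside the oracle calls, so \textsc{BC-Size} lies in $P^{NP} = \Delta^p_2$.

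For $D^P$-hardness I would verify that the map $\phi \mapsto (G, n+2)$, with $G$ produced by Definition \ref{def4}, is a polynomial-time many-one reduction from \textsc{Unique-SAT} to \textsc{BC-Size}. The vertex counts are easy to bound: the $2n$ trees $T^{\hat x}$ contribute $O(nc)$ vertices, the $2n$ trees $H^{\hat x}$ contribute $O(n^2)$, the tree $T^r$ has size $2^{d_1+d_2+1} = O(nc)$, the $c$ clause-stars add $O(nc)$ further vertices, and the path $P$ adds $O(n+\log n)$, so $G$ has size polynomial in $|\phi|$ and can be written down in polynomial time. Correctness of the reduction is exactly the content of Lemma \ref{lm4}, and $D^P$-hardness then follows because \textsc{Unique-SAT} is $D^P$-complete.

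The heavy lifting has already been absorbed into Lemma \ref{lm4}, whose four-case broadcast-center analysis is what couples satisfying assignments to the exact set $\{s, r\} \cup \{r_{\hat{x_i}} : \hat{x_i} \text{ lies in the satisfying assignment}\}$. The theorem itself therefore has no serious remaining obstacle; the only thing I would take care to state explicitly is that many-one reductions preserve $D^P$-hardness, so the membership and hardness arguments can be quoted independently to pin \textsc{BC-Size} into the interval between $D^P$ and $\Delta^p_2$.
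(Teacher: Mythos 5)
Your proposal matches the paper's own argument exactly: the paper likewise obtains the theorem by combining Lemma \ref{td} for membership in $\Delta^p_2$, Lemma \ref{lm4} for the correctness of the reduction from \textsc{Unique-SAT}, and the observation that the construction in Definition \ref{def4} is polynomial-time. Your explicit size accounting for the gadgets is a reasonable (and slightly more careful) elaboration of the paper's one-line claim that the construction is polynomial, but it is not a different approach.
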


\section{Conclusion}
In this paper, we demonstrated that the \textsc{Broadcast Graph} problem is NP-complete by reducing it from \textsc{ST-Broadcast Time}. The NP-completeness proof of \textsc{ST-Broadcast Time} is also provided in the appendix. 
Moreover, we further investigated the complexity of the NP-hard problem \textsc{BC-Size}. We show that \textsc{BC-Size} is $D^P$-hard by reducing it from \textsc{Unique-SAT}. Present work provides a finer complexity interval for the problem, which is between $D^P$-hard and $\Delta_p^2$-complete. In future work, we aim to precisely locate the complexity of \textsc{BC-Size} within the boolean hierarchy($BH$) or prove its $\Delta_p^2$-completeness.

\appendix
\setcounter{figure}{0}
\setcounter{definition}{0}
\setcounter{lemma}{0}
\setcounter{theorem}{0}
\setcounter{problem}{0}
\renewcommand{\thefigure}{A\arabic{figure}}
\renewcommand{\thedefinition}{A\arabic{definition}}\renewcommand{\thelemma}{A\arabic{lemma}}
\renewcommand{\thetheorem}{A\arabic{theorem}}
\renewcommand{\theproblem}{A\arabic{problem}}

\section{Single Origin Time-bounded Broadcast Time Problem is NP-complete}\label{app:st}

\begin{lemma}\label{a1}
\textsc{ST-Broadcast Time} is in NP.
\end{lemma}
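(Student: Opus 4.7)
The plan is to mirror the verifier strategy used in Lemma~\ref{lm1}, specialised to the single-originator setting. A certificate will be a single spanning tree $T$ of $G$ rooted at $v$, together with an edge labelling $\tau\colon E(T)\to\{1,2,\dots,\lceil\log|V|\rceil\}$ recording the time unit of each call. This has size $O(n\log n)$, clearly polynomial in the input.

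The verifier I have in mind would proceed in two stages. First, it confirms that $T\subseteq G$, that $T$ spans $V$, and that $T$ is a tree rooted at $v$, all in one linear-time traversal. Second, it checks that $\tau$ encodes a legal broadcast scheme by verifying two local conditions at every vertex: for each non-root $u$ with parent $p$, the label $\tau(pu)$ is strictly greater than the label on the edge entering $p$; and no vertex has two outgoing tree-edges carrying the same label. A single BFS from $v$ handles both checks. As an alternative in line with the proof of Lemma~\ref{lm1}, one could instead test whether $T$ embeds as a rooted subtree of $BT_{\lceil\log|V|\rceil}$ via Matula's algorithm~\cite{matula_subtree_1978}, at cost $O(n^{5/2})$.

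The step that might look like an obstacle is the equality in the problem definition: \textsc{ST-Broadcast Time} asks whether $b(G,v)=\lceil\log|V|\rceil$, not merely $b(G,v)\leq\lceil\log|V|\rceil$. This is resolved by recalling the universal doubling lower bound $b(G,v)\geq\lceil\log|V|\rceil$ that holds for every connected graph on $|V|$ vertices. Hence $b(G,v)=\lceil\log|V|\rceil$ iff $b(G,v)\leq\lceil\log|V|\rceil$, and exhibiting any scheme of length at most $\lceil\log|V|\rceil$ is a complete yes-certificate. Membership in NP then follows from the polynomial running time of the verifier.
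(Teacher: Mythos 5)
Your proposal is correct and follows essentially the same route as the paper: the certificate is a spanning tree of $G$ rooted at $v$ encoding a broadcast scheme, verified in polynomial time (you even cite the same Matula subtree-embedding test as an alternative verifier). Your added timestamps with local feasibility checks, and your explicit remark that the universal lower bound $b(G,v)\geq\lceil\log|V|\rceil$ converts the equality in the problem statement into a one-sided certificate, are refinements the paper leaves implicit rather than a different argument.
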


\begin{proof}
We prove the lemma by presenting a polynomial-time verifier. 
Let $(G,v)$ be an instance of \textsc{ST-Broadcast Time}, where $G=(V,E), |V|=n$, and $v$ is the single originator. The certificate for the instance is a spanning tree $T$ of $G$, rooted at $v$, which intends to represent a valid broadcast scheme. Verifying the certificate that whether $T$ is a subtree of $BT_{\lceil \log n \rceil}$, the time complexity would be $1\times (O(n^{5/2})+O(\log n))$, which is $O(n^{5/2})$ \cite{matula_subtree_1978}.
\end{proof}

To prove the NP-hardness, we present a reduction from the problem \textsc{3-Dimensional Matching}. 
\begin{problem}\textsc{3-Dimensional Matching}
\begin{itemize}
\item Input: A five-tuple $(X, Y, Z, W, k)$, where \begin{itemize}
\item $X, Y, Z$ are three disjoint sets of cardinality $k$, and 
\item $W\subseteq X\times Y\times Z$. 
\end{itemize} 
\item Question: Is there a set $M \subseteq W$ of size $k$, such that for two arbitrary elements $(x_1,y_1,z_1),(x_2,y_2,z_2)\in M$, $x_1\neq x_2$, $y_1\neq y_2$, and $ z_1\neq z_2$?
\end{itemize}
\end{problem}

Similar to the original proof \textsc{Broadcast Time} is NP-complete \cite{slater_information_1981}, we present each \textsc{3-Dimensional Matching} instance as a graph.
\begin{definition}
Assume $(X,Y,Z,W,k)$ is an instance of $3DM$, $G_0=(V_0,E_0)$ is the graph representation s.t.\begin{itemize} \item $V_0 = X\cup Y \cup Z\cup W$, and\item $E_0 = \{((x,y,z),s)|(x,y,z)\in W,\ s=x\ or\ s=y\ or\ s=z\}$. \end{itemize}
\end{definition}

We also need to define a special compounding, denoted as a binary operator $\times_t$.
\begin{definition}
Let $G=(V,E)$ be a graph, $V'=\{v_1,v_2,...,v_n\}$ be a subset of $V$, $T$ be a tree rooted at $r_0$ and $T_1,...,T_n$ be $n$ replicas of $T$ rooted at $r_1,...,r_n$ respectively. The compound $V'\times_t T$ is a graph $G\cup T_1\cup T_2\cup ... \cup T_n$ such that merging $r_i$ with $v_i$.
\end{definition}

We further define a new family of trees $BT_k^{-i}$ based on binomial trees for $i\in \mathbb{N}$, $ i\leq k$. Note that a binomial tree $BT_k$ can be regarded as a root with $k$ branches of $BT_{k-1},BT_{k-2},...,BT_0$. Assume that the branches are ranked by the order of binomial trees, $BT_k^{-i}$ is obtained from $BT_k$ by removing the “smallest" $i$ branches.
 Formally, $BT_k^{-i}$ can also be defined recursively. 
\begin{definition}
Assume the root of a tree $T_i^j$ is denoted as $r_i^j$,
 $BT_k^{-i}$ is
\begin{itemize}  
\item $(\{r_k\},\{ \})$, if $i=k$;
\item $(V(BT_{k-1})\cup V(BT_{k-1}^{-i}), E(BT_{k-1})\cup E(BT_{k-1}^{-i}) \cup \{ r_{k-1}, r_{k-1}^{-i}\}),$ and is rooted at $r_{k-1}^{-i}$, if $i<k$.
\end{itemize}
\end{definition}

The construction for the reduction on graph representation $G_0=(V_0,E_0)$ of \textsc{3-Dimensional Matching} instance $(W,X,Y,Z,k)$ has following intuition. First, we construct a binomial tree $BT_{\lceil \log w \rceil+1}$, where $w=|W|$. Next, the leaf set $L$ of the binomial tree is partitioned into $(L_1,L_2,L_3)$ such that $|L_1|=k, |L_2|=w-k$, and $L_3$ contains the rest $2^{\lceil \log w\rceil}-w$ leaves. Then, $L_1$ is compounded with $BT_3$ and $L_2$ is compounded with $BT_4^{-1}$. Consider $G_0$ as a subgraph in the construction, the next step adds edges to let $(L_1\cup L_2,W)$ form a fully connected bipartite graph. And the last step adds vertices and edges to guarantee that every $x_i\in X$ is adjacent to two pendant vertices $x_i'$ and $x_i''$; and every $y_i\in Y$ is adjacent to one pendant vertex $y_i'$. To be noted, this construction includes the graph in \cite{slater_information_1981} as a substructure. Formally, \begin{definition}\label{defa}
Let $G_f=(V_f,E_f)$ be a graph such that
\begin{align*} 
V_f\ =&\ V_0\cup V(BT_{\lceil \log w \rceil+1}) \cup V(L_1\times_t BT_3) \cup V(L_2\times_t BT_4^{-1}) \cup X' \cup X'' \cup Y'\\
E_f\ =&\ E_0\cup E(BT_{\lceil \log w \rceil+1})\cup E(L_1\times_t BT_3)\cup E(L_2\times_t BT_4^{-1}) \cup \\&\ \{(u,v)|u\in L_1\cup L_2, v\in W\}\cup \{(x_i,x_i')|x_i\in X,x_i'\in X'\cup X'', 1\leq i \leq k\} \cup \\&\ \{(y_i,y_i')|y_i\in Y, y_i'\in Y'\cup Y'',1\leq i \leq k\},
\end{align*}
where $X'$ and $X''$ are replicas of $X$, and $Y'$ is a replica of $Y$; and all other notations are as what we have defined above. $(G_f,v_0)$ is an $STBT$ instance.
\end{definition}
To be noted, $G_f$ is undefined when $w < k$ because $(W,X,Y,Z,k)$ is a yes-instance of $3DM$ only when $w\geq k$. We can trivially map those instances to no-instances of $STBT$.
figure \ref{fexample} gives an example of the construction of $G_f$.
\begin{figure}
\begin{center}
    \def\svgwidth{0.8\textwidth}\import{./figures/}{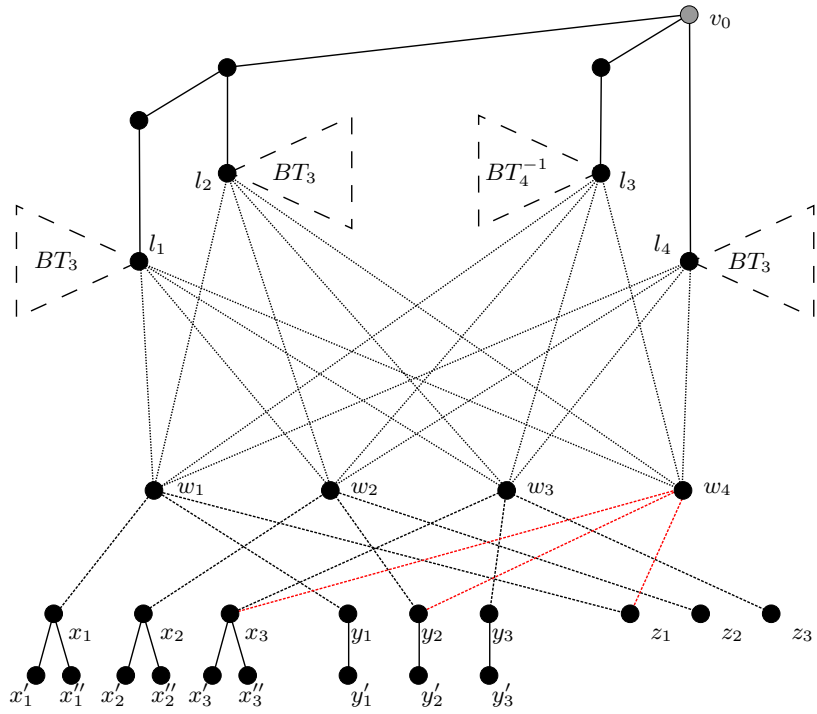}
    \caption{\textbf{An example of the construction of $G_f$ respected to the $3DM$ instance $(X=\{x_1,x_2,x_3\}$, $Y=\{y_1,y_2,y_3\}$, $Z=\{z_1,z_2,z_3\}$, $W=\{(x_1,y_1,z_1)$,$(x_2,y_2,z_2)$,$(x_3,y_3,z_3)$,$(x_3,y_2,z_1)\},k=3)$.} The grey vertex is the originator $v_0$.}
\label{fexample}
\end{center}
\end{figure}
\begin{lemma}\label{a2}
The input $(X,Y,Z,W,k)$ is a yes-instance of $3DM$ if and only if $G_f$ is a yes-instance of $STBT$.
\end{lemma}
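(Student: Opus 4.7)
The strategy is to prove the two directions separately: forward by explicitly constructing an optimum broadcast scheme from a given 3D matching, and backward by a contrapositive argument showing that the absence of such a matching forces the broadcast to exceed the target time $\lceil\log|V_f|\rceil$.

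For $(\Rightarrow)$, assume a 3D matching $M=\{(x_{i_j},y_{i_j},z_{i_j})\mid 1\le j\le k\}$ exists. I would specify a broadcast scheme from $v_0$ as follows. First, $v_0$ performs the standard busy broadcast on $BT_{\lceil\log w\rceil+1}$, with the partition $(L_1,L_2,L_3)$ chosen so that the leaves informed latest fall into $L_3$ (which has no attachment). The $k$ leaves of $L_1$ are paired bijectively with the $k$ triples of $M$ via the complete bipartite link; each $\ell\in L_1$ calls its assigned matching triple in its first free slot and then broadcasts inside its $BT_3$ attachment. Since the root of $BT_3$ has three children, the four busy slots of $\ell$ accommodate three $BT_3$-calls plus one triple-call. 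The matching triple then relays sequentially to $x,y,z$, which inform their pendants, and this chain completes within the four slots of $\ell$'s subtree. In parallel, each $\ell'\in L_2$ broadcasts its $BT_4^{-1}$ over three slots (the root has three children $BT_3,BT_2,BT_1$) and uses its fourth slot to call a distinct non-matching triple; no further propagation is required because the components of non-matching triples are already covered through $M$. A direct schedule accounting shows all of $V_f$ is informed by time $\lceil\log|V_f|\rceil$, and the universal lower bound $b(G_f,v_0)\ge\lceil\log|V_f|\rceil$ gives equality.

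For $(\Leftarrow)$, I would argue the contrapositive: if no 3D matching of size $k$ exists, then $b(G_f,v_0)>\lceil\log|V_f|\rceil$. The degree-one pendants in $X'\cup X''\cup Y'$ force each $x_i$ to be informed at least two slots before the deadline and each $y_i$ at least one slot before. Since the only non-pendant neighbors of $X\cup Y\cup Z$ lie in $W$, every component must be reached via a triple, and that triple must itself be called by a leaf in $L_1\cup L_2$ sufficiently early to absorb the downstream chain. Tracing the budget backward along each path $v_0\to \text{leaf}\to \text{triple}\to \text{component}\to \text{pendant}$, I would show that the set of triples that actually carry information to $X\cup Y\cup Z$ must cover the three sides disjointly, which is exactly a 3D matching of size $k$. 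Absence of such a matching then forces at least one pendant to miss the deadline.

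The main obstacle I anticipate is the backward direction, which requires ruling out indirect routes that might seem to save time. An $L_2$ leaf can try to propagate its triple by issuing the triple-call first, but this delays the $BT_4^{-1}$ broadcast by one step and violates the budget for late-informed leaves. A triple could also be reached through one of its own component vertices rather than directly from a leaf, but such a detour stretches the chain by at least one call. Closing these escape routes requires a careful accounting that leverages the asymmetries tuned into the construction: the difference between the broadcast depths of $BT_3$ and $BT_4^{-1}$, the separation of leaves into $L_1$, $L_2$, and $L_3$, and the two, one, and zero pendants attached to the vertices of $X$, $Y$, and $Z$ respectively.
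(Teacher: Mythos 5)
Your proposal follows essentially the same route as the paper's proof: the forward direction uses the identical scheme (busy broadcast of $BT_{\lceil\log w\rceil+1}$, each $L_1$ leaf spending its single spare slot on a matching triple which then serves one vertex of $X$, $Y$, $Z$ in successive slots, and each $L_2$ leaf delayed by its $BT_4^{-1}$ attachment until the final slot to mop up the non-matching triples), and the backward direction is the same contrapositive argument that only the $k$ leaves of $L_1$ can reach $W$ early enough, so the informed triples must disjointly cover $X\cup Y\cup Z$ under the pendant-imposed deadlines, i.e.\ form a $3$-dimensional matching. If anything, your accounting of the backward direction is more explicit than the paper's; the only slip is treating the partition $(L_1,L_2,L_3)$ as a choice made by the scheduler when it is fixed in the construction of $G_f$, which is harmless here since every leaf is informed by time $\lceil\log w\rceil+1$ anyway.
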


\begin{proof}
$(\Rightarrow)$ Assume $(X,Y,Z,W,k)$ is a yes-instance of $3DM$, that there exists a 3-dimensional matching $M\subseteq W$ of size $k$. From \cite{slater_information_1981}, we know that $b(G_0,M)=3$. Let $G_f$ be a graph defined above respected to $(X,Y,Z,W,k)$. Then, the broadcast originated from $v_0$ in $G_f$ is as follows.
\begin{enumerate}
\item From time unit 1 to $\lceil \log w \rceil+1$, $v_0$ broadcasts in $BT_{\lceil \log w \rceil+1}$ following the broadcast scheme of the binomial tree. 
\item From time unit $\lceil \log w \rceil+2$ to $\lceil \log w \rceil+5$, the broadcast is split into two parallel processes. 
\begin{enumerate}
\item Each leaf $l$ in $L_1$, which is a root of $BT_3$, informs its neighbor in $W$ at time unit $\lceil \log w \rceil+2$. At this time $k$ vertices in $W$ are informed in total, and the rest of $w-k$ vertices are informed by case $(b)$. Then each $l$  broadcasts in $BT_3$, and each of the $k$ informed vertices in $W$ broadcasts in $G_0$ separately, in time units $\lceil \log w \rceil+3$ to $\lceil \log w \rceil+5$.
\item Each leaf $l'$ in $L_2$, which is a root of $BT_4^{-1}$, is forced to broadcast in $BT_4^{-1}$ in $\lceil \log w \rceil+2$ to $\lceil \log w \rceil+4$. In $\lceil \log w \rceil+5$, vertices in $L_2$ are released to call the $w-k$ uninformed vertices $W$.
\end{enumerate}
\end{enumerate}
According to above broadcast scheme, the broadcast time from the originator $v_0$, $b(G_f,v_0) =\lceil \log w\rceil+5 =\ \lceil \log |V_f| \rceil$,  that $G_f$ is a yes-instance of $STBT$.

$(\Leftarrow)$Assume $(X,Y,Z,W,k)$ is a no-instance of $3DM$, which means that no 3-dimensional matching $M\subseteq W$ exists. In the graph representation $G_0$, the broadcast time from any subset $S$ satisfying $S\subseteq W$ and $|S|=k$, to $S\cup X\cup Y\cup Z$ is greater than 3. Hence, the broadcast from $v_0$ in $G_f$ must takes the first $\lceil \log w \rceil+1$ time units to accomplish the broadcast in $BT_{\lceil \log w \rceil+1}$. Then in time $\lceil \log w \rceil+2$, at most $k$ vertices in $W$ is informed, which is insufficient to finish broadcasting in $G_0$ in time $\lceil \log w\rceil+5$. Thus, $b(G_f, v_0) > \lceil \log w\rceil+5$.

According to the construction of $G_f$, $|V_f|=15w+2^{\lceil \log w \rceil+1}-k$, which is polynomial to $|V_0|=3k+w$.
\end{proof}

\noindent Combining lemmas \ref{a1} and \ref{a2},
\begin{theorem}
\textsc{ST-Broadcast Time} is NP-complete.
\end{theorem}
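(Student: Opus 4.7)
The plan is to stitch together the two appendix lemmas. Lemma \ref{a1} already gives NP membership, via a polynomial-time verifier that takes as certificate a spanning tree $T$ of $G$ rooted at the specified originator $v$ and checks, using the subtree-isomorphism algorithm of Matula, whether $T$ embeds into $BT_{\lceil \log n\rceil}$; if it does, the canonical level-by-level labeling of the embedding is a valid broadcast scheme of depth $\lceil \log n\rceil$. So the remaining task is to invoke NP-hardness.

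For NP-hardness I would reduce from \textsc{3-Dimensional Matching}, exactly as in Lemma \ref{a2}. Given $(X,Y,Z,W,k)$, first build the bipartite-style graph $G_0$ with vertex set $X\cup Y\cup Z\cup W$ and edges joining each triple in $W$ to its three coordinates; Slater et al.\ showed a multi-originator broadcast of $G_0$ from a $k$-subset of $W$ to $X\cup Y\cup Z$ completes in three time units iff the subset is a 3-dimensional matching. Next, attach $G_0$ to a binomial tree $BT_{\lceil\log w\rceil+1}$ rooted at the chosen originator $v_0$: split its leaf set into $L_1$ (size $k$, each leaf compounded with a $BT_3$) and $L_2$ (size $w-k$, each compounded with a $BT_4^{-1}$), and make every leaf in $L_1\cup L_2$ adjacent to every vertex of $W$. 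The pendants $X',X'',Y'$ exist to pad out low-degree vertices in $G_0$ so that the final broadcast step at each informed vertex really consumes a full time unit. The output instance is $(G_f,v_0)$ with $|V_f|=15w+2^{\lceil\log w\rceil+1}-k$, chosen so that $\lceil \log|V_f|\rceil = \lceil\log w\rceil+5$; this construction is polynomial in the input.

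For the two directions of Lemma \ref{a2}: in the forward direction I would schedule $v_0$ to fill $BT_{\lceil \log w\rceil+1}$ in the first $\lceil \log w\rceil+1$ rounds, then in round $\lceil \log w\rceil+2$ have each leaf of $L_1$ call the matching-designated triple in $W$ (covering $X\cup Y\cup Z$ in three more rounds via the matching argument of $G_0$), while simultaneously each leaf of $L_2$ begins a $BT_4^{-1}$ broadcast; the $L_2$ leaves then inform the remaining $w-k$ vertices of $W$ in the final round, and every $BT_3$, $BT_4^{-1}$, and pendant-star finishes exactly at time $\lceil\log w\rceil+5$. In the converse direction I would argue that since $v_0$ is far from $W$ along the binomial tree, the first $\lceil\log w\rceil+1$ rounds are forced; then at most $k$ vertices of $W$ can be informed in round $\lceil\log w\rceil+2$, and completing $X\cup Y\cup Z$ from any such $k$-subset in three further rounds requires that the subset be a 3-dimensional matching, which is impossible in the no-case.

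The main obstacle, and the reason the reduction is genuinely non-trivial rather than a direct port of Slater et al., is that \textsc{ST-Broadcast Time} fixes both the originator and the deadline $\lceil \log|V|\rceil$, so the construction must not leave any slack: the vertex count has to land in the right logarithmic window, the $L_2$ branch must keep exactly enough vertices busy during the round in which $L_1$ delivers the matching, and the pendant vertices $X',X'',Y'$ must be tuned so that the no-case forces the deadline to be exceeded by at least one unit no matter how cleverly the adversary re-routes the scheme. Once these timing invariants are verified, combining Lemmas \ref{a1} and \ref{a2} with the polynomial construction from Definition \ref{defa} yields the theorem.
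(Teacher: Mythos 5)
Your proposal follows essentially the same route as the paper: NP membership via the spanning-tree certificate checked against $BT_{\lceil\log n\rceil}$ (Lemma \ref{a1}), and NP-hardness via the identical reduction from \textsc{3-Dimensional Matching} using $BT_{\lceil\log w\rceil+1}$ with $L_1$ compounded with $BT_3$, $L_2$ with $BT_4^{-1}$, the pendant sets $X',X'',Y'$, and the same two-direction timing argument (Lemma \ref{a2}). The construction, vertex count, and broadcast schedule all match the paper's, so the argument is correct and not a different approach.
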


\bibliographystyle{splncs04}
\bibliography{xjh}

\end{document}